\newtheorem{thm}{Theorem}[section]
\newtheorem{prop}[thm]{Proposition}
\theoremstyle{definition}
\theoremstyle{remark}
\numberwithin{equation}{section}
\renewcommand{\Re}{\hbox{Re}\,}
\renewcommand{\Im}{\hbox{Im}\,}
\newcommand{\C}{\mathbb{C}}
\newcommand{\N}{\mathbb{N}}
\newcommand{\R}{\mathbb{R}}
\newcommand{\supp}{\operatorname{supp}}
\def\hat{\widehat}
\def\tilde{\widetilde}
\def \bfo {\begin {eqnarray*} }
\def \efo {\end {eqnarray*} }
\def \ba {\begin {eqnarray*} }
\def \ea {\end {eqnarray*} }
\def \beq {\begin {eqnarray}}
\def \eeq {\end {eqnarray}}
\def \supp {\hbox{supp }}
\def \det {\hbox{det}}
\def \p {\partial}
\def\hat{\widehat}
\def\tilde{\widetilde}
\def \bfo {\begin {eqnarray*} }
\def \efo {\end {eqnarray*} }
\def \ba {\begin {eqnarray*} }
\def \ea {\end {eqnarray*} }
\def \beq {\begin {eqnarray}}
\def \eeq {\end {eqnarray}}
\def \supp {\hbox{supp }}
\def \det {\hbox{det}}
\def \p {\partial}
\begin{document}

 \title[Bounds on transmission eigenvalues]{The interior transmission problem and bounds on transmission eigenvalues}

\author[Hitrik]{Michael Hitrik}

\address
        {M. Hitrik,  Department of Mathematics\\
    UCLA\\
    Los Angeles\\
    CA 90095-1555\\
    USA }

\email{hitrik@math.ucla.edu}

\author[Krupchyk]{Katsiaryna Krupchyk}

\address
        {K. Krupchyk, Department of Mathematics and Statistics \\
         University of Helsinki\\
         P.O. Box 68 \\
         FI-00014   Helsinki\\
         Finland}

\email{katya.krupchyk@helsinki.fi}

\author[Ola]{Petri Ola}

\address
        {P. Ola, Department of Mathematics and Statistics \\
         University of Helsinki\\
         P.O. Box 68 \\
         FI-00014   Helsinki\\
         Finland}

\email{Petri.Ola@helsinki.fi}

\author[P\"aiv\"arinta]{Lassi P\"aiv\"arinta}

\address
        {L. P\"aiv\"arinta, Department of Mathematics and Statistics \\
         University of Helsinki\\
         P.O. Box 68 \\
         FI-00014   Helsinki\\
         Finland}

\email{Lassi.Paivarinta@helsinki.fi}

\begin{abstract}

We study the interior transmission eigenvalue problem for sign-definite multiplicative perturbations of the Laplacian in a bounded domain. We show that all but finitely many complex transmission eigenvalues are confined to a parabolic neighborhood of the positive real axis.

\end{abstract}

\maketitle

\section{Introduction and statement of results}

Recently, there has been a large number of  new developments in the study of transmission eigenvalues and the interior transmission eigenvalue problem for elliptic operators with constant coefficients, see e.g.  \cite{CakColHous10, ColMonk88, ColPaiSyl07, HitKruOlaPai, HitKruOlaPai2,  paisyl08}. Transmission eigenvalues play an essential role in reconstruction algorithms of inverse scattering theory in an inhomogeneous medium, such as the sampling method and  the factorization method \cite{CakColbook, ColKir96, KirGribook},  and also carry information about the scatterer \cite{CakColGint_complex, paisyl08}.

The discreteness of the set of transmission eigenvalues was established in  \cite{vanhatkonnatI} in the case of the Laplacian -- see also \cite{HitKruOlaPai} for more general operators. As for the existence of transmission eigenvalues, the first results are due to  \cite{paisyl08}, and the existence of infinitely many real transmission eigenvalues was shown in     \cite{CakDroHou}. Going into the complex spectral plane, the existence of transmission eigenvalues off the real axis has been demonstrated in the recent paper \cite{CakColGint_complex} in a particular situation.

The purpose of this note is to study the location of transmission eigenvalues in the complex plane. We show that the transmission eigenvalues are confined to a parabolic neighborhood of the positive real axis.  To the best of our knowledge, the only previous result concerning the location of transmission eigenvalues is due to  \cite{CakColGint_complex}, where it is proved that under suitable additional assumptions, the transmission eigenvalues belong to the right half plane.

We shall now proceed to recall the precise statement of the interior transmission problem.
Let  $\Omega\subset \R^n$ be a bounded domain with $C^\infty$-boundary $\p \Omega$,  and
 $m\in C^\infty(\overline{\Omega},\R)$  with $m>0$ in $\overline{\Omega}$.  In the context of scattering theory, the function $1+m$ represents the index of refraction  of an inhomogeneous medium, with $\overline{\Omega}$ being the support of the perturbation $m$.

  The interior transmission eigenvalue problem for the operator  $P_0=-\Delta$ is the following degenerate boundary value problem,
\begin{equation}
\label{eq_ITP}
\begin{aligned}
(P_0-\lambda)v=0 \quad &\text{in} \quad \Omega,\\
(P_0-\lambda(1 + m))w=0 \quad &\text{in} \quad \Omega,\\
 v-w \in H^{2}_0(\Omega).
\end{aligned}
\end{equation}
Here
\[
H^2_0(\Omega)=\{u\in H^2(\R^n):\supp (u)\subset \overline{\Omega}\},
\]
where $H^2(\R^n)$ is the standard Sobolev space.

We say that $0\ne\lambda\in \C$ is a transmission eigenvalue if the problem \eqref{eq_ITP} has non-trivial solutions $0\ne v\in L^2(\Omega)$ and  $0\ne w\in L^2(\Omega)$.

The following is the main result of this note.

\begin{thm}
\label{thm_main}
There exist $0<\delta<1$ and $C>1$ that such all
 transmission eigenvalues $\lambda\in \C$ with $|\lambda|>C$
 satisfy
\[
\emph{\Re} \lambda>0,\quad  |\emph{\Im} \lambda|\le C|\lambda|^{1-\delta}.
\]
\end{thm}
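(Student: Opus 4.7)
The plan is to recast \eqref{eq_ITP} as a fourth-order Dirichlet--Neumann boundary value problem and then apply semiclassical elliptic analysis with a large spectral parameter. Setting $u = v - w$, we have $u \in H^2_0(\Omega)$, and combining the two equations gives $(P_0 - \lambda)u = -\lambda m w$, so that $w = -\frac{1}{\lambda m}(P_0 - \lambda)u$ is fully determined by $u$. Substituting back into $(P_0 - \lambda(1+m))w = 0$ produces the fourth-order equation
\[
P_\lambda u := \bigl(P_0 - \lambda(1+m)\bigr)\,\frac{1}{m}\,(P_0 - \lambda)\,u = 0, \qquad u \in H^2_0(\Omega).
\]
Conversely, any non-trivial $u \in H^2_0(\Omega)$ annihilated by $P_\lambda$ yields a transmission eigenfunction pair $(v,w)$, non-trivial because $w \equiv 0$ would force $u$ to solve the over-determined Cauchy problem $(P_0 - \lambda)u = 0$ with $u = \partial_\nu u = 0$ on $\partial \Omega$, and hence to vanish. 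It therefore suffices to prove that the fourth-order boundary value problem has only the trivial solution throughout the complement of the parabolic neighbourhood.

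Next I would introduce the semiclassical parameter $h = |\lambda|^{-1/2}$ and $z = h^2 \lambda$, so that $|z| = 1$, and work with
\[
P_h := h^4 P_\lambda = \bigl(-h^2 \Delta - z(1+m)\bigr)\,\frac{1}{m}\,\bigl(-h^2 \Delta - z\bigr),
\]
a semiclassical differential operator of order four whose principal symbol factors as $p_0(x,\xi) = m(x)^{-1}(\xi^2 - z)(\xi^2 - z(1+m(x)))$. The elementary bounds $|\xi^2 - z| \ge |\Im z|$ and $|\xi^2 - z(1+m)| \ge (1+m)|\Im z|$ give $|p_0(x,\xi)| \ge c\,|\Im z|^2$ on any bounded $\xi$-set, while $|p_0(x,\xi)| \ge c|\xi|^4$ for $\xi$ outside a fixed ball; thus $P_h$ is semiclassically elliptic on $\overline\Omega \times \R^n$ as soon as $\Im z \ne 0$, the ellipticity constant degenerating like $|\Im z|^2$. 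At any $x_0 \in \partial \Omega$ and tangential covector $\xi'$, the normal-variable roots of $p_0(x_0, \xi', \cdot)$ are $\pm\sqrt{z - |\xi'|^2}$ and $\pm\sqrt{z(1+m(x_0)) - |\xi'|^2}$, all with non-zero imaginary parts as soon as $z \notin \R_+$, and the Dirichlet--Neumann traces $(u, \partial_\nu u)$ are plainly independent modulo the degree-two stable factor. Hence the Shapiro--Lopatinski condition is satisfied and $P_h$ defines a parameter-elliptic boundary value problem whose parameter-ellipticity constant is bounded from below by $|\Im z|^2$.

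A standard parametrix construction for parameter-elliptic boundary value problems, carried out while keeping explicit track of the $|\Im z|$-degeneracy, then yields an a priori estimate of the form
\[
\|u\|_{L^2(\Omega)} \le C\,|\Im z|^{-N}\,\|P_h u\|_{L^2(\Omega)} + C\,h\,|\Im z|^{-N'}\,\|u\|_{L^2(\Omega)}
\]
for $u \in H^2_0(\Omega)$ and fixed $N, N' > 0$. The last term is absorbed into the left-hand side as soon as $|\Im z| \ge C h^{2\delta}$ for $\delta = 1/(2N')$, and in this region $P_h u = 0$ forces $u \equiv 0$; translating back via $\lambda = z/h^2$ gives precisely $|\Im \lambda| \le C|\lambda|^{1-\delta}$. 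The right half-plane inclusion $\Re \lambda > 0$ for $|\lambda|$ large follows from the same argument in the region $\Re z \le 0$: there the improved estimate $(\xi^2 - \Re z)^2 \ge (\Re z)^2$ combined with $(\Im z)^2$ gives $|\xi^2 - z|^2 \ge |z|^2 = 1$ (and analogously for the second factor), so $p_0$ is elliptic uniformly in $z$, the a priori estimate becomes $h$-uniform, and no large eigenvalues with $\Re \lambda \le 0$ are possible. The main technical obstacle is precisely the $|\Im z|$-bookkeeping inside the parametrix: the factorisation of $p_0$ into stable and unstable normal-variable factors, the solution of the Shapiro--Lopatinski system at the boundary, and the subprincipal corrections produced by non-commutation all introduce negative powers of $|\Im z|$, and one has to verify that they remain controlled by a single fixed negative power, so that the absorption argument closes for some $\delta > 0$.
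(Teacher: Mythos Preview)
Your proposal is correct and follows essentially the same strategy as the paper: reduce to the fourth-order Dirichlet problem for $P_\lambda$, rescale to a semiclassical operator with parameter $z$ on the unit circle, exploit the factorisation of the principal symbol to build a parametrix while tracking the $|\Im z|^{-1}$ losses, and close the argument once $|\Im z|\ge h^{c}$ for some $c>0$.

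Two points of genuine difference are worth noting. First, you aim directly at an a~priori estimate (injectivity of $P_h$ on $H^2_0$), whereas the paper constructs a \emph{right} parametrix for the full boundary operator $\mathcal T=(T,\gamma_0,\gamma_0 hD_\nu)$ and then invokes an index-zero Fredholm argument (their Proposition~2.1) to pass from surjectivity to bijectivity; your route is slightly leaner in that it bypasses the index computation, but it requires the parametrix to be built as a \emph{left} inverse, which in practice is the same construction read the other way. Second, for the left half-plane you observe that $\Re z\le 0$ restores uniform ellipticity of $p_0$ (since $|\xi^2-z|\ge |z|$ there) and simply rerun the semiclassical estimate with no $|\Im z|$ loss. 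The paper instead drops the semiclassical machinery entirely at this step and uses a bare-hands quadratic-form computation of $\Re(T(\lambda)u,u)$, which is more elementary and gives an explicit eigenvalue-free region $\Re\lambda<-2\|\nabla q\|_{L^\infty}^2$; your approach is more uniform but less explicit. Both endpoints are valid.
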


\textbf{Remark}. It follows from the proof that we can take $\delta=1/25$.

\begin{figure}[htbp]
\begin{center}
\setlength{\unitlength}{1mm}
\begin{picture}(100,70)
  \put(0,30){\vector(1,0){80}}
    \put(25,0){\vector(0,1){60}}
     \put(40,20){\line(0,1){20}}
     \qbezier(40,20)(45,5)(80,0)
     \qbezier(40,40)(45,55)(80,60)
      \put(25, 63){\makebox(0, 0){$\Im \lambda$}}
       \put(83, 27){\makebox(0, 0){$\Re \lambda$}}
        \put(70, 64){\makebox(0, 0){$\Im \lambda=C|\lambda|^{1-\delta}$}}
         \color{red}
  \linethickness{0.3mm}
  \put(50, 20){\line(1, 0){4}}
  \put(52, 18){\line(0, 1){4}}
  \put(50, 40){\line(1, 0){4}}
  \put(52, 38){\line(0, 1){4}}
  \put(44, 30){\line(1, 0){4}}
  \put(46, 28){\line(0, 1){4}}
  \put(58, 24){\line(1, 0){4}}
  \put(60, 22){\line(0, 1){4}}
  \put(58, 36){\line(1, 0){4}}
  \put(60, 34){\line(0, 1){4}}
  \put(66, 6){\line(1, 0){4}}
  \put(68, 4){\line(0, 1){4}}
  \put(66, 54){\line(1, 0){4}}
  \put(68, 52){\line(0, 1){4}}
  \put(70, 30){\line(1, 0){4}}
  \put(72, 28){\line(0, 1){4}}
   \end{picture}
\caption{All but finitely many transmission eigenvalues are located in a parabolic neighborhood about the positive real axis}
\end{center}
\end{figure}
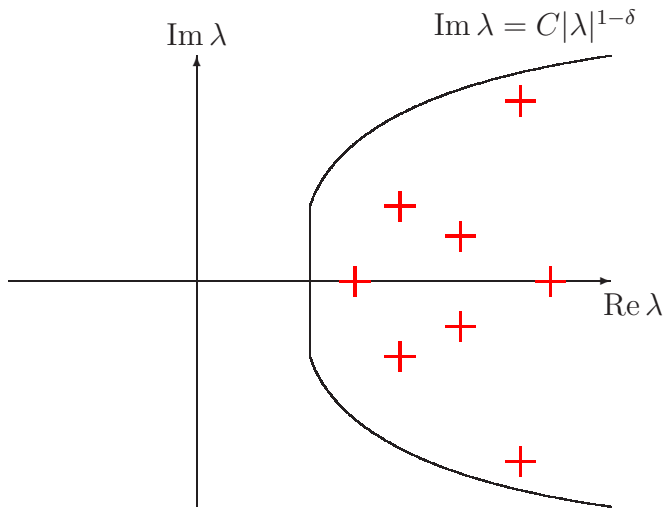

The proof of Theorem \ref{thm_main} is carried out in several  steps. First,  following  \cite{paisyl08}, in Section 2 we reformulate the interior transmission problem \eqref{eq_ITP} as an elliptic boundary value problem for a quadratic operator pencil.  We are interested in the invertibility properties of the pencil in question. It appears that available results on quadratic pencils in the literature such as e.g. \cite{DenkMenVol2000},  \cite{Markus} do not seem to be applicable in our situation. We shall therefore adopt a direct approach, based on methods of the semiclassical analysis. The second step in the proof is a reduction to a semiclassical boundary value problem, given in Section 3.
This problem is  inverted  asymptotically in Section 4, which leads to the absence of transmission eigenvalues in a parabolic neighborhood of the real axis. The final step of the proof of Theorem \ref{thm_main} is done in Section 5, where we show that the left-half plane contains at most finitely many transmission eigenvalues.  In the appendix we have collected some basic facts concerning the semiclassical calculus which are used in the main part of the paper.

It would be interesting to study the distribution of transmission eigenvalues inside of the parabolic region of Theorem \ref{thm_main}. We hope to return to this problem in the future, where the methods of this work could be expected to be applicable.

\section{Reduction to an elliptic boundary value problem}

From \cite{paisyl08} let us recall the following characterization of transmission eigenvalues.
A complex number
$\lambda\ne 0$ is a transmission eigenvalue if and only if there exists $0\ne u\in  H^{2}_0(\Omega)$ satisfying
\[
T(\lambda) u:=(P_0-\lambda(1+m))\frac{1}{m}(P_0-\lambda)u=0.
\]
Notice that by elliptic regularity, $u\in C^\infty(\overline{\Omega})$.

We have
\[
T(\lambda) =A-\lambda B +\lambda^2 C,
\]
where
\[
A=P_0qP_0,\quad
B=qP_0+P_0q+P_0,\quad
C=1+q,\quad
q=\frac{1}{m}.
\]

Let us consider the following boundary value problem,
\begin{equation}
\label{eq_bvp_elliptic_1}
\begin{aligned}
T(\lambda) u=f, \quad &\text{in}\quad \Omega,\\
\gamma_0 u=g_1, \quad &\text{on}\quad \p\Omega,\\
\gamma_0 \p_\nu u=g_2, \quad &\text{on}\quad \p\Omega,
\end{aligned}
\end{equation}
where $\nu$ is the exterior unit normal to the boundary $\p�\Omega$, and $\gamma_0$ is the operator of the restriction to $\p \Omega$.
Let
\[
\mathcal{T}(\lambda):u\mapsto (T(\lambda)u, \gamma_0 u, \gamma_0 \p_\nu u)
\]
 and
\begin{equation}
\label{eq_space_H_s}
\mathcal{H}^s=H^{s-4}(\Omega)\times H^{s-1/2}(\p\Omega)\times H^{s-3/2}(\p \Omega), \quad s> 3/2,
\end{equation}
where
\[
H^t(\Omega)=\{u|_{\Omega}: u\in H^t(\R^n)\},\quad t\in \R,
\]
and $H^t(\p \Omega)$ is the standard Sobolev space on $\p \Omega$.
It is then known that for any $\lambda\in \C$,  \eqref{eq_bvp_elliptic_1} is an elliptic boundary value problem in the classical sense, and hence, the operator
\[
\mathcal{T}(\lambda):H^s(\Omega)\to \mathcal{H}^s, \quad s> 3/2,
\]
is Fredholm, see for instance \cite{ChazPir_book, Grubbbook2009, Taylor_book_1}. In what follows in \eqref{eq_space_H_s} we shall take $s=4$.

\begin{prop}
\label{prop_index_0}

For any $\lambda\in \C$,
$
\emph{ind}(\mathcal{T}(\lambda))=0.
$
\end{prop}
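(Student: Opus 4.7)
The plan is to prove this by homotopy invariance of the Fredholm index, reducing to $\lambda=0$ where the factorization $T(0)=P_0 q P_0$ together with positivity of $q$ forces $\mathcal{T}(0)$ to be bijective.

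First, since $T(\lambda)=A-\lambda B+\lambda^{2}C$ is a polynomial in $\lambda$ whose coefficients are bounded operators $H^{4}(\Omega)\to L^{2}(\Omega)$, the assignment $\lambda\mapsto \mathcal{T}(\lambda)$ is a holomorphic family of Fredholm operators $H^{4}(\Omega)\to\mathcal{H}^{4}$. Because the Fredholm index is locally constant on the open set of Fredholm operators and $\C$ is connected, $\ind(\mathcal{T}(\lambda))$ is independent of $\lambda$, so it suffices to show $\ind(\mathcal{T}(0))=0$.

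Next, I would reduce to homogeneous boundary data by fixing a bounded linear right inverse $E:H^{7/2}(\p\Omega)\times H^{5/2}(\p\Omega)\to H^{4}(\Omega)$ of the trace map $u\mapsto(\gamma_{0}u,\gamma_{0}\p_{\nu}u)$, which exists by standard trace theory on smooth domains. Writing $u=u_{0}+E(\gamma_{0}u,\gamma_{0}\p_{\nu}u)$ with $u_{0}\in H^{4}(\Omega)\cap H^{2}_{0}(\Omega)$, the operator $\mathcal{T}(0)$ takes a block upper-triangular form with the identity on the boundary-to-boundary diagonal block, and the product formula for Fredholm indices yields
\[
\ind(\mathcal{T}(0))=\ind\bigl(P_{0}qP_{0}\colon H^{4}(\Omega)\cap H^{2}_{0}(\Omega)\to L^{2}(\Omega)\bigr).
\]

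Finally, I would show this restricted operator is a bijection by a coercivity argument. The Hermitian sesquilinear form $a(u,v)=\int_{\Omega}q\,(P_{0}u)\overline{(P_{0}v)}\,dx$ on $H^{2}_{0}(\Omega)$ is continuous, and since $q\ge c>0$ on $\overline{\Omega}$ while the $H^{2}$-regularity of the Dirichlet Laplacian gives $\|u\|_{H^{2}}\le C\|\Delta u\|_{L^{2}}$ for $u\in H^{2}_{0}(\Omega)\subset H^{1}_{0}(\Omega)$, the form is coercive. By Lax--Milgram, for each $f\in L^{2}(\Omega)$ there is a unique $u\in H^{2}_{0}(\Omega)$ with $a(u,v)=(f,v)_{L^{2}}$ for all $v$, which amounts to $P_{0}qP_{0}u=f$ in the distributional sense. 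The main technical point will then be to promote this weak solution to $H^{4}$ via elliptic regularity for the fourth-order Dirichlet problem; this is a standard consequence of Agmon--Douglis--Nirenberg theory, since $P_{0}qP_{0}$ is properly elliptic with principal symbol $q(x)|\xi|^{4}$ and the conditions $\gamma_{0}u=\gamma_{0}\p_{\nu}u=0$ satisfy the Lopatinski--Shapiro covering condition. Once this upgrade is in hand, $P_{0}qP_{0}$ is a bijection between the stated spaces, so its index is zero.
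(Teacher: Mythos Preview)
Your proposal is correct and follows essentially the same route as the paper: reduce to $\lambda=0$ by continuity of the Fredholm index along the polynomial family, then show that $\mathcal{T}(0)$ is a bijection. The paper is terser, citing a previous work for the fact that $T(0)$ on $H^{4}(\Omega)\cap H^{2}_{0}(\Omega)$ is selfadjoint and positive and then invoking surjectivity of the trace map directly; your block upper-triangular reduction and Lax--Milgram/ADN regularity argument simply make that same bijectivity self-contained.
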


\begin{proof} In
 \cite{HitKruOlaPai} it was shown that the operator $T(0)$, equipped with the domain $H^4(\Omega)\cap H^2_0(\Omega)$ is selfadjoint and  positive.  It follows that  the operator $\mathcal{T}(0)$ is injective.

To see the surjectivity of $\mathcal{T}(0)$ it suffices to notice that the trace operator
\[
(\gamma_0,\gamma_0 \p _\nu): H^4(\Omega)\to H^{4-1/2}(\p\Omega)\times H^{4-3/2}(\p \Omega),
\]
is surjective, as well as $T(0)$.
 Thus,  $\mathcal{T}(0)$ is an isomorphism,  and, hence,  $\mathcal{T}(\lambda)$  has index zero, for each $\lambda\in \C$ .

\end{proof}


\section{Semiclassical reduction}

Let us extend $q\in C^\infty(\overline{\Omega})$ to the whole of  $\R^n$ in such a way that the extension, still denoted by $q$,
satisfies $q\in C^\infty_b(\R^n)$, $q>0$, and  $q$ is a positive constant near infinity. Here
\[
C^\infty_b(\R^n)=\{ u\in C^\infty(\R^n):\p^\alpha u\in L^\infty(\R^n), \forall \alpha\}.
\]
Then  $T(\lambda)$ becomes  an elliptic partial differential operator of order four on $\R^n$, with coefficients in $C^\infty_b(\R^n)$, depending polynomially on $\lambda$.

We shall study the family of operators $T(\lambda)$ in the regime $|\lambda|\gg 1$.  It will be convenient to make a semiclassical reduction of $T(\lambda)$, so that  we write
\[
\lambda=\frac{z}{h^2},
\]
 where $0<h\ll 1$ is a semiclassical parameter and $z\in \C$,
$|z|\sim 1$.
The idea in the semiclassical approach is to write $T=h^4 T(\lambda)$ in the form, where all the partial derivatives $\p_{x_i}$ are multiplied by the semiclassical parameter $h$. In this way we arrive at
\begin{equation}
\label{eq_family_T}
T=T(x,hD_x, z;h)=h^4T(\lambda) =A_h-z B_h +z^2 C_h, \quad \text{in}\quad \R^n,
\end{equation}
\[
A_h=h^2P_0qh^2P_0,\quad
B_h=qh^2P_0+h^2P_0q+h^2P_0,\quad
C_h=1+q.
\]

Let us consider the semiclassical version of the boundary value problem \eqref{eq_bvp_elliptic_1},
\begin{equation}
\label{eq_bvp}
\begin{aligned}
T(x,hD_x, z;h) u&=f, \quad \textrm{in}\quad  \Omega,\\
u|_{\p \Omega}&=g_1, \\
 h D_\nu u|_{\p \Omega}&=g_2,
\end{aligned}
\end{equation}
with
$D_\nu=i^{-1}\p_\nu$.

We have $\overline{T(x,hD_x, z;h) u}=T(x,hD_x,\overline{z};h)\overline{u}$, and therefore, it will suffice to consider the region $\Im z>0$.
The main step in the proof of  Theorem \ref{thm_main} is a construction of  a right parametrix for the boundary value problem \eqref{eq_bvp} in the region $\Im z\ge h^{\delta/2}$, for $\delta >0$ sufficiently small. The semiclassical parametrix  construction implies the existence of a right inverse for the operator
\[
\mathcal{T}: H^4(\Omega)\to \mathcal{H}^4, \quad  \mathcal{T}u=(Tu, \gamma_0 u, \gamma_0 h D_\nu u),
\]
for $\Im z\ge h^{\delta/2}$ and all $h$ small enough. Here the spaces $H^4(\Omega)$ and $\mathcal{H}^4$ are equipped with the natural semiclassical norms.
In view of Proposition \ref{prop_index_0}, this leads to the  absence of transmission eigenvalues in the region $|\lambda|\ge C$ and $|\Im \lambda|\ge C|\lambda|^{1-\frac{\delta}{4}}$,  for some constant $C>0$.
The parametrix  construction for the boundary value problem \eqref{eq_bvp} is carried out in Section \ref{sec_parametrix}  and in Section \ref{sec_left_half} the proof of Theorem \ref{thm_main}  is completed by observing that the left-half plane $\Re \lambda<0$ contains at most finitely many transmission eigenvalues.

\section{Parametrix construction for the problem \eqref{eq_bvp}}
\label{sec_parametrix}

\subsection{Inverting the family $T(x,hD_x,z;h)$ in $\R^n$  }

We shall be concerned with the family $T$ in the region of the complex spectral plane, where  $\Im z\ge h^{\delta/2}$, $\delta>0$ small enough. We refer to the appendix for the notation and basic facts of the calculus of semiclassical pseudodifferential operators.

Let $t=t_0+ht_1$ be the full symbol of $T(x,hD_{x}, z;h)\in \textrm{Op}_h(S^4)$. Here $t_0$ is
the semiclassical leading symbol of $T(x,hD_{x}, z;h)$ given by
\begin{equation}
\label{eq_t_0_factor}
\begin{aligned}
t_0(x,\xi,z)&=q(x)p_0^2(x,\xi)-z(2q(x)+1)p_0(x,\xi)+z^2(q(x)+1)\\
&=q(x)(z-p_0(x,\xi))\left(\frac{q(x)+1}{q(x)}z-p_0(x,\xi)\right),
\end{aligned}
\end{equation}
where $p_0(x,\xi)=\xi^2$, and $t_1\in S^3$.

Since $|z|$ is in a  bounded set and $q,1/q\in L^\infty(\R^n)$, we have
\[
|t_0(x,\xi,z)|\ge \begin{cases} (\Im z)^2, & (x,\xi)\in \R^n\times \R^n,\\
\langle \xi \rangle^{4}/C, & |\xi| \ge C,
\end{cases}
\]
where $C$ is large enough.

We have the following result giving a parametrix construction for $T$ in $\R^n$.

\begin{prop}
\label{prop_interior_parametrix}
Consider the region  $\emph{\Im} z\ge h^{\delta/2}$, $0\le \delta< 1/2$.  Then there exist $r_j\in  S^{\delta+2\delta j, -4-j}_{\delta}$, $j=0,1,\dots$, such that for any $N\in \N$,
\[
T\emph{\textrm{Op}}_h(\sum_{j=0}^N h^j r_j)=I +h^{N+1}\emph{\textrm{Op}}_h(c_N), \quad c_N\in S^{2\delta(N+1),-N-1}_\delta.
\]
Here $r_0=1/t_0$ and $r_j$, $j\ge 1$, are of the form $f_j/g_j$, where $g_j$ is a positive power of $t_0$ and $f_j$ is a polynomial in $z,\xi$, whose coefficients are smooth in $x$.
\end{prop}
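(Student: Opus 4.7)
The plan is the standard iterative construction of a right parametrix by successive cancellation in the semiclassical composition formula
\[
t \# r \sim \sum_{\alpha} \frac{h^{|\alpha|}}{i^{|\alpha|}\alpha!}\partial^\alpha_\xi t\cdot \partial^\alpha_x r,
\]
carried out in the exotic class $S^{\cdot,\cdot}_\delta$ which tracks simultaneously the polyhomogeneous order in $\xi$ and the degeneration $|t_0|\gtrsim (\Im z)^2\gtrsim h^\delta$ of the principal symbol near the real axis.

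First I would establish that $r_0:=1/t_0\in S^{\delta,-4}_\delta$. The factorization \eqref{eq_t_0_factor} writes $t_0$ as a product of two factors of the form $(\cdot)\,z-p_0(x,\xi)$, each with real $p_0$, positive coefficient of $z$, and hence imaginary part of size $\Im z$; combined with $\Im z\ge h^{\delta/2}$ and the pointwise bound $|t_0|\ge \langle\xi\rangle^4/C$ for large $|\xi|$, this yields $|r_0|\le C h^{-\delta}\langle\xi\rangle^{-4}$. Differentiating the identity $t_0 r_0=1$ and solving algebraically, an induction on $|\alpha|+|\beta|$ gives
\[
|\partial^\alpha_x\partial^\beta_\xi r_0|\le C_{\alpha\beta} h^{-\delta(1+|\alpha|+|\beta|)}\langle\xi\rangle^{-4-|\beta|},
\]
the gain of $\langle\xi\rangle^{-1}$ per $\xi$-derivative being preserved because $\partial^\beta_\xi t_0\in S^{4-|\beta|}$ and the improved bound on $|t_0|$ for large $|\xi|$ absorbs the polynomial growth.

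Next I would set up the recursion. With $t=t_0+ht_1$, matching powers of $h$ in $t\#(\sum_{j\ge 0}h^j r_j)=1$ yields the purely algebraic transport equations
\[
t_0 r_j=-\sum_{\substack{k+\ell+|\alpha|=j\\ k<j,\ \ell\in\{0,1\}}}\frac{1}{i^{|\alpha|}\alpha!}\,\partial^\alpha_\xi t_\ell\cdot\partial^\alpha_x r_k,
\]
which determine $r_j$ uniquely of the form $r_j=f_j/t_0^{j+1}$ with $f_j$ polynomial in $(z,\xi)$ and smooth in $x$. An induction then verifies $r_j\in S^{\delta+2\delta j,-4-j}_\delta$: each factor of $r_0$ contributes $h^{-\delta}\langle\xi\rangle^{-4}$, and each derivative contributes $h^{-\delta}$, with an additional $\langle\xi\rangle^{-1}$ for $\xi$-derivatives. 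The combinatorial bookkeeping closes because, since $k<j$, each recursion step supplies at least one extra derivative or one extra factor of $t_1$, and the constraint $1\le 2\ell+|\alpha|$ converts the gained $h$ into the required increment of the $h^{-\delta}$-weight.

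Finally, the remainder $h^{N+1}\mathrm{Op}_h(c_N)=T\,\mathrm{Op}_h\big(\sum_{j=0}^N h^j r_j\big)-I$ collects those terms of the composition formula not cancelled by the recursion; each is of the form $h^{j+|\alpha|}\partial^\alpha_\xi t_\ell\cdot \partial^\alpha_x r_j$ with $j+\ell+|\alpha|\ge N+1$, and by the same bookkeeping lies in $h^{N+1}\cdot S^{2\delta(N+1),-N-1}_\delta$, as required.

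The main technical point -- and the only non-routine aspect of the construction -- is the exotic nature of the symbol class: since each iteration improves the error only by $h^{1-2\delta}$ rather than by $h$, the scheme closes precisely when $2\delta<1$, which is the hypothesis in the statement. Once this is observed, the stationary-phase justification of the composition formula in the class $S^{\cdot,\cdot}_\delta$ recalled in the appendix applies uniformly for $\Im z\in [h^{\delta/2},O(1)]$, and the construction goes through.
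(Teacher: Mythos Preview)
Your proposal is correct and follows essentially the same route as the paper: set $r_0=1/t_0$, verify $r_0\in S^{\delta,-4}_\delta$ via the lower bound $|t_0|\ge h^\delta$ (the paper invokes Fa\`a di Bruno, you differentiate $t_0r_0=1$ and induct---equivalent), then solve the transport equations $t_0r_j=-(\text{previous terms})$ and track the symbol classes. Your explicit remark that the scheme closes exactly because each step gains $h^{1-2\delta}$, hence the hypothesis $\delta<1/2$, is a useful clarification the paper leaves implicit.
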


\begin{proof}

Set
\[
r_0(x,\xi,z)=\frac{1}{t_0(x,\xi,z)}.
\]
Let us first show that $r_0\in S^{\delta,-4}_\delta$. Indeed,
using the Fa\`a di Bruno formula \cite{Lerner_book} and the fact that $|t_0|\ge h^\delta$, we get, for bounded $|\xi|$,
\[
|\p ^\alpha_x\partial^\beta_\xi r_0|\le C_{\alpha,\beta}h^{-\delta}h^{-\delta(|\alpha|+|\beta|)}.
\]
Since the estimate for large $|\xi|$ is clear, the claim follows.

We have
\[
T\textrm{Op}_h(r_0)=1+\textrm{Op}_h(c_0),
\]
where
\[
c_0=ht_1r_0+\sum_{|\alpha|=1}^4 \frac{h^{|\alpha|}}{i^{|\alpha|}\alpha!}\p_\xi^\alpha t\p_x^\alpha r_0\in hS^{2\delta,-1}_\delta.
\]
Next we shall determine $r_1\in S^{3\delta,-5}_\delta$ so that
\begin{equation}
\label{eq_r_1}
T\textrm{Op}_h(r_0+hr_1)=1+\textrm{Op}_h(c_1), \quad c_1\in h^{2}S^{4\delta,-2}_\delta.
\end{equation}
Arguing as above, we see that it suffices to choose $r_1$ so that $c_0+ht_0r_1=0$. With this choice, we get
\eqref{eq_r_1} with
\[
c_1=h^2t_1r_1+h\sum_{|\alpha|=1}^4\frac{h^{|\alpha|}}{i^{|\alpha|}\alpha!}\p_\xi^\alpha t\p_x^{\alpha} r_1\in h^{2}S^{4\delta,-2}_\delta.
\]
Iterating the above procedure with the choice $r_j=-h^{-j}t_0^{-1}c_{j-1}$ at each step, we get the result.

\end{proof}

Set
\[
r^{(N)}=\sum_{j=0}^N h^jr_j\in S^{\delta,-4}_\delta,
\]
where $N$ is large enough but fixed.
The operator $\textrm{Op}_h(r^{(N)})$ will serve as a right parametrix for our boundary value problem in the interior of $\overline{\Omega}$.

\subsection{The boundary parametrix}
Recall that we consider the region of the complex spectral plane, where  $\Im z\ge h^{\delta/2}$, $\delta>0$ small enough.
When constructing the parametrix for  \eqref{eq_bvp} near a boundary point, it will be convenient to straighten out the boundary locally by means of the boundary normal coordinates.
Let $x_0\in \p \Omega$ and introduce the boundary normal coordinates $y=(y',y_n)\in \textrm{neigh}(0,\R^n)$, $y'=(y_1,\dots,y_{n-1})$, centered at $x_0$.
Here $\textrm{neigh}(0,\R^n)$ stands for some open neighborhood of $0$ in $\R^n$.
In terms of $y$, locally near $x_0$,  $\p \Omega$ is defined by $y_n=0$, and $y_n>0$ if and only if $x\in \Omega$.
The principal symbol of $P_0$ expressed in the new coordinates becomes
\begin{equation}
\label{eq_p_0_1}
p_0(y,\eta)=\eta_n^2+s(y,\eta').
\end{equation}
Here $s(y',0,\eta')>0$ is the principal symbol of the Laplace--Beltrami  operator $-\Delta_{\p \Omega}$ on $\p \Omega$, expressed in the local coordinates $y'$,  see \cite{LeeUhl89}.

The problem \eqref{eq_bvp} in terms of the coordinates $y$ is given by
\begin{equation}
\label{eq_bvp_upper_half}
\begin{aligned}
T(y,hD_y, z;h) u(y',y_n)=f(y',y_n) \quad &\text{in}\quad \R^n_+,\\
u(y',y_n)|_{y_n=0}=g_1(y'), \quad &\\
 hD_{y_n} u(y',y_n)|_{y_n=0}=g_2(y'), \quad &
\end{aligned}
\end{equation}
where $\R^n_+$ is the half-space $y_n>0$.

Working locally near $y=0$ in  $\R^n$, let $\tilde f$ be the zero extension of $f$ to $\R^n$.
We shall look for the right parametrix of \eqref{eq_bvp_upper_half} in the form
\[
R(f,g_1,g_2)=\textrm{Op}_h(r^{(N)})(\tilde f)+R_b(\psi_1,\psi_2),
\]
where
\[
\psi_j(y')=g_j(y')-\gamma_0(hD_{y_n})^{j-1}\textrm{Op}_h(r^{(N)})(\tilde f), \quad j=1,2,
\]
and
\[
\gamma_0:u\mapsto u|_{y_n=0}
\]
is the restriction operator from the half space $\R^n_+$. Here $R_b$ should be a right parametrix of  the  boundary value problem
\begin{equation}
\label{eq_bvp_upper_half_0}
\begin{aligned}
T(y,hD_y, z;h) u(y',y_n)=0 \quad &\text{in}\quad \R^n_+,\\
u(y',y_n)|_{y_n=0}=\psi_1(y'), \quad &\\
 hD_{y_n} u(y',y_n)|_{y_n=0}=\psi_2(y'). \quad &
\end{aligned}
\end{equation}
We now shall construct $R_b$. In what follows,  we shall write  $(x',x_n)$ instead of $(y',y_n)$, and $(\xi',\xi_n)$ instead of $(\eta',\eta_n)$.

The construction will proceed similarly to \cite{SjoZwo93} and is essentially well-known in the theory of elliptic  boundary value problems, see e.g. \cite{ChazPir_book, Eskin_book, Eskin_notes, Grubbbook2009}.
For the convenience of the reader, we shall sketch a direct argument in the present semiclassical framework.

It follows from \eqref{eq_t_0_factor} together with \eqref{eq_p_0_1} that the equation
\[
t_0(x,\xi',\xi_n,z)=0
\]
has the solutions
\[
\xi_n=\sigma_j^+(x,\xi',z),\quad j=1,2,
\]
in the open upper half-plane, and the solutions
\[
\xi_n=\sigma_j^-(x,\xi',z),\quad j=1,2,
\]
in the open lower half-plane.
We have explicitly,
\begin{equation}
\label{eq_roots_1}
\sigma_1^\pm(x,\xi',z)=\pm \sqrt{z-s(x,\xi')}, \quad
\sigma_2^\pm(x,\xi',z)=\pm \sqrt{\frac{q+1}{q}z-s(x,\xi')},
\end{equation}
where we fix the branch of the square root with a positive imaginary part. In particular, we see that $\sigma_1^+(x,\xi',z)\ne \sigma_2^+(x,\xi',z)$ for all values of $x,\xi',z$.

For large $|\xi'|$, we have
$|\sigma_j^\pm(x,\xi',z)|\sim |\xi'|$ and
$|\Im \sigma_j^\pm(x,\xi',z)|\sim |\xi'|$, $j=1,2$. Furthermore,  $\sigma_j^\pm(x,\xi',z)\in S^1$ for large $|\xi'|$. Using that $\Im z\ge h^{\delta/2}$, we see that for bounded $|\xi'|$,
\[
|\Im \sigma_j^\pm(x,\xi',z)|\ge \frac{h^{\delta/2}}{C}.
\]

We have the factorization
\[
t_0(x,\xi',\xi_n)=q(x)t_0^+(x,\xi',\xi_n)t_0^-(x,\xi',\xi_n),
\]
\[
t_0^\pm(x,\xi',\xi_n)=(\xi_n-\sigma_1^\pm(x,\xi',z))(\xi_n-\sigma_2^\pm(x,\xi',z)).
\]

Recall from Proposition \ref{prop_interior_parametrix}  that $r^{(N)}(x,\xi',\xi_n,z;h)$ extends to a meromorphic function of $\xi_n\in \C$ with the poles at $\sigma_j^\pm(x,\xi',z)$. To be precise, following \cite {SjoZwo93}, let us notice that the function $r^{(N)}(x,\xi',\xi_n,z;h)$ belongs to the symbol class $S^{\delta, -4}_\delta$ in the domain
\[
\{(x,\xi',\xi_n): x\in\textrm{neigh}(0,\R^n), \xi'\in\R^{n-1}, \xi_n\in
\Omega(x,\xi',z)\},
\]
where
for large $|\xi'|$,
\[
\Omega(x,\xi',z)=\{\xi_n\in \C:|\xi_n|\le C\langle \xi'\rangle,|\xi_n-\sigma_j^+|\ge \langle \xi'\rangle/C,j=1,2,\Im \xi_n\ge \frac{1}{C}\langle \xi'\rangle\}
\]
whereas, for $|\xi'|=\mathcal{O}(1)$,
\[
\Omega(x,\xi',z)=\Omega_1(x,\xi',z)\cup\Omega_2(x,\xi',z),
\]
\[
\Omega_1(x,\xi',z)=\{|\xi_n|\le C, 0\le \Im \xi_n\le \frac{h^{\delta/2}}{C}\}
\]
and
\[
\Omega_2(x,\xi',z)=
\{|\xi_n|\le C, \Im \xi_n\ge \frac{h^{\delta/2}}{C},|\xi_n-\sigma_j^+(x,\xi',z)|\ge \frac{1}{C},j=1,2\}.
\]
Here $C>0$ is an arbitrarily large  but fixed constant. This follows from our estimates for the roots $\sigma_j^\pm$.

Let $\gamma=\gamma(x,\xi',z;h)$ be a simple closed $C^1$ curve in $\Omega(x,\xi',z)$, which encircles the roots $\sigma_1^+(x,\xi',z)$ and $\sigma_2^+(x,\xi',z)$
in the positive sense, and such that the length of $\gamma$ is $\mathcal{O}(\langle\xi' \rangle)$.

Continuing to follow  \cite{SjoZwo93}, locally near $0$,
we define the operators,
\begin{align*}
&\Pi_j:C^\infty_0(\textrm{neigh}(0,\R^{n-1}))\to C^\infty(\textrm{neigh}(0,\overline{\R}^n_+)), \\
&\Pi_j\varphi(x)=\frac{1}{(2\pi h)^{n-1}}\int_{\xi'\in\R^{n-1}}\int_{\xi_n\in \gamma} e^{ix\cdot\xi/h}r^{(N)}(x,\xi,z;h)\xi_n^j\hat{\varphi}\bigg(\frac{\xi'}{h}\bigg)d\xi'\frac{1}{2\pi i}d\xi_n,
\end{align*}
$ j=0,1$.
From \cite{ChazPir_book}, we recall the following mapping properties,
\begin{equation}
\label{eq_mapping_prop}
\Pi_j:H^s_0(\textrm{neigh}(0,\R^{n-1}))\to H^{s+4-j-1/2}(\textrm{neigh}(0,\R^n_+)),\quad s\in \R.
\end{equation}

As  the poles of the meromorphic function $\xi_n\mapsto r^{(N)}(x,\xi',\xi_n,z;h)$ in the upper half-plane  are precisely $\sigma_j^+$, $j=1,2$, a contour deformation argument in the $\xi_n$-plane shows that
\begin{equation}
\label{eq_pi_r_+}
\Pi_j\varphi=\frac{h}{i} \textrm{Op}_h(r^{(N)})(\varphi\otimes (hD_{x_n})^j\delta_{x_n=0}), \quad j=1,2, \quad x_n>0.
\end{equation}
The operators $\Pi_j$ can therefore be viewed as Poisson operators for the boundary value problem \eqref{eq_bvp_upper_half_0}.
Using that the operator $T$ is local together with   \eqref{eq_pi_r_+},
we get from Proposition \ref{prop_interior_parametrix},
\begin{equation}
\label{eq_TPI}
T\Pi_j\varphi=h^{N+1}\textrm{Op}_h(c_N)(\varphi\otimes (hD_{x_n})^j\delta_{x_n=0}),\quad x_n>0,
\end{equation}
with $c_N\in S^{2\delta(N+1),-N-1}_\delta$.

We shall construct the parametrix $R_b$ of the boundary value problem \eqref{eq_bvp_upper_half_0} in the form,
\[
R_b(\psi_1,\psi_2)=\Pi_0(\varphi_0)+\Pi_1(\varphi_1)
\]
for some functions $\varphi_0, \varphi_1$, defined locally near $0\in \R^{n-1}$, to be determined. In view of \eqref{eq_TPI},
we need only to compute $\gamma_0\Pi_j$ and $\gamma_0hD_{x_n}\Pi_j$, $j=0,1$.

Let $r^{(N)}=r_0+h\tilde r^{(N)}$, where  $r_0\in S^{\delta, -4}_{\delta}$ and $\tilde r^{(N)}\in S^{3\delta, -5}_{\delta}$. Then  we have
\[
\gamma_0\Pi_j=\textrm{Op}_h(d_j),
\]
where
\[
d_j=\frac{1}{2\pi i}\gamma_0\int_{\xi_n\in \gamma} e^{ix_n\xi_n/h}\xi_n^j r^{(N)}(x,\xi,z;h)d\xi_n=d_{j,0}+h\tilde d_j.
\]
with
\[
d_{j,0}=\frac{1}{2\pi i}\int_{\xi_n\in \gamma} \xi_n^j r_0(x',0,\xi,z)d\xi_n\in S^{\delta, -4+j+1}_\delta,
\]
and $\tilde d_j\in S^{3\delta, -4+j}_\delta$. Here we have used the assumption that the length of  the contour $\gamma$ is $\mathcal{O}(\langle \xi'\rangle)$.

The residue calculus gives that
\begin{equation}
\label{eq_d}
d_{j,0}(x',\xi',z)=\sum_{\nu=1}^2\frac{(\sigma_\nu^+)^j}{\p_{\xi_n}t_0(x',0,\xi',\sigma_\nu^+)}.
\end{equation}

For $j=0,1$, we compute next
\begin{align*}
\gamma_0hD_{x_n}\Pi_j\varphi=\frac{1}{(2\pi h)^{n-1}}\int_{\xi'\in\R^{n-1}}e^{ix'\cdot \xi'/h}c_j(x',\xi',z')
\hat{\varphi}\bigg(\frac{\xi'}{h}\bigg)d\xi'=\textrm{Op}_h(c_j)\varphi,
\end{align*}
where
\begin{align*}
c_j&=
\frac{1}{2\pi i}\gamma_0\int_{\xi_n\in \gamma} (\xi_n^{j+1}r^{(N)}(x,\xi,z;h)+ \xi_n^j hD_{x_n}r^{(N)}(x,\xi,z;h))e^{ix_n\xi_n/h} d\xi_n\\
&=c_{j,0}+h\tilde c_{j}.
\end{align*}
Here
\[
c_{j,0}=\frac{1}{2\pi i}\int_{\xi_n\in \gamma} \xi_n^{j+1}r_0(x',0,\xi',\xi_n,z)d\xi_n\in S_\delta^{\delta,-2+j},
\]
and $\tilde c_j\in S^{3\delta,-3+j}_\delta$.
We have
\begin{equation}
\label{eq_c}
c_{j,0}=\sum_{\nu=1}^2\frac{(\sigma_\nu^+)^{j+1}}{\p_{\xi_n}t_0(x',0,\xi',\sigma_\nu^+)}.
\end{equation}

Hence, we obtain the following pseudodifferential system on the boundary,
\[
A
\begin{pmatrix}\varphi_0\\
\varphi_1
\end{pmatrix}= \begin{pmatrix} \psi_1\\
\psi_2
\end{pmatrix}, \quad A=\begin{pmatrix} \gamma_0\Pi_0& \gamma_0\Pi_1\\
\gamma_0hD_{x_n}\Pi_0& \gamma_0h D_{x_n}\Pi_1
\end{pmatrix}
=
\begin{pmatrix}
\textrm{Op}_h(d_0) & \textrm{Op}_h(d_1)\\
\textrm{Op}_h(c_0) & \textrm{Op}_h(c_1)
\end{pmatrix}.
\]
In view of \eqref{eq_d} and \eqref{eq_c}, we see that the semiclassical principal symbol of $A$  is given by
\begin{align*}
a(x',\xi',z)&=\begin{pmatrix}
\frac{1}{\p_{\xi_n}t_0(\sigma_1^+)}+\frac{1}{\p_{\xi_n}t_0(\sigma_2^+)} & \frac{\sigma_1^+}{\p_{\xi_n}t_0(\sigma_1^+)}+\frac{\sigma_2^+}{\p_{\xi_n}t_0(\sigma_2^+)} \\
\frac{\sigma_1^+}{\p_{\xi_n}t_0(\sigma_1^+)}+\frac{\sigma_2^+}{\p_{\xi_n}t_0(\sigma_2^+)} & \frac{(\sigma_1^+)^2}{\p_{\xi_n}t_0(\sigma_1^+)}+\frac{(\sigma_2^+)^2}{\p_{\xi_n}t_0(\sigma_2^+)}
\end{pmatrix}\\
&=
\begin{pmatrix}
1 & 1\\
\sigma_1^+ & \sigma_2^+
\end{pmatrix}
\begin{pmatrix}
\frac{1}{\p_{\xi_n}t_0(\sigma_1^+)} & 0\\
0 & \frac{1}{\p_{\xi_n}t_0(\sigma_2^+)}
\end{pmatrix}
\begin{pmatrix}
1 & \sigma_1^+ \\
1 & \sigma_2^+
\end{pmatrix}.
\end{align*}
Writing $a=(a_{jk})$, $1\le j,k\le 2$, we observe that $a_{jk}\in S_\delta^{\delta,j+k-5}$. In order to invert $A$, let us consider $\det(a(x',\xi',z))\in S^{2\delta, -4}_\delta$. It follows from \eqref{eq_roots_1} that for large $|\xi'|$,
\[
|\det (a(x',\xi',z)) |\sim \langle \xi'\rangle^{-4},
\]
while for $|\xi'|=\mathcal{O}(1)$,
\[
|\det (a(x',\xi',z)) |\ge 1/C.
\]
The Fa\`a di Bruno formula \cite{Lerner_book} implies that
for large $|\xi'|$,
\[
\bigg|\p_{x'}^\alpha\p_{\xi'}^\beta \frac{1}{\det (a)}\bigg|\le C_{\alpha,\beta}\langle \xi'\rangle^{4-|\beta|},
\]
and for $|\xi'|=\mathcal{O}(1)$,
\[
\bigg|\p_{x'}^\alpha\p_{\xi'}^\beta \frac{1}{\det (a)}\bigg|\le C_{\alpha,\beta}h^{-3\delta(|\alpha|+|\beta|)}.
\]
Hence, $1/\det(a)\in S^{0,4}_{3\delta}$. It follows that if $b=a^{-1}=(b_{jk})$, $1\le j,k\le 2$, that $b_{jk}\in S^{\delta,-j-k+5}_{3\delta}$.
We obtain that
\[
A\textrm{Op}_h(b)=I-h\textrm{Op}_h(e),\quad e\in
\begin{pmatrix}
S^{4\delta,-1}_{3\delta} & S^{4\delta,-2}_{3\delta}\\
S^{4\delta,0}_{3\delta} & S^{4\delta,-1}_{3\delta}
\end{pmatrix},
\]
provided that $\delta<1/6$. Let
\begin{equation}
\label{eq_mapping_B}
B^{(N)}=\textrm{Op}_h(b)\sum_{k=0}^{N-1}(h\textrm{Op}_h(e))^k\in \textrm{Op}_h\begin{pmatrix}
S^{\delta,3}_{3\delta} & S^{\delta,2}_{3\delta}\\
S^{\delta,2}_{3\delta} & S^{\delta,1}_{3\delta}
\end{pmatrix},
 \quad N\in \N.
\end{equation}
Then
\[
AB^{(N)}=I-h^N\textrm{Op}_h(e^{(N)}), \quad e^{(N)}
\in
\begin{pmatrix}
S^{4\delta N,-N}_{3\delta} & S^{4\delta N,-N-1}_{3\delta}\\
S^{4\delta N,-N+1}_{3\delta} & S^{4\delta N,-N}_{3\delta}
\end{pmatrix}.
\]
Introducing
\begin{equation}
\label{eq_G}
G_0=\Pi_0B_{11}^{(N)}+\Pi_1 B_{21}^{(N)},\quad G_1=\Pi_0B_{12}^{(N)}+\Pi_1 B_{22}^{(N)},
\end{equation}
we define the boundary parametrix $R_b$ by
\[
R_b(\psi_1,\psi_2)=G_0\psi_1+G_1\psi_2.
\]
Thus, we have
\begin{equation}
\label{eq_1}
\begin{aligned}
\gamma_0 R_b(\psi_1,\psi_2)=&\psi_1-h^N \textrm{Op}_h(e_{11}^{(N)})\psi_1-h^N \textrm{Op}_h(e_{12}^{(N)})\psi_2,\\
\gamma_0 hD_{x_n} R_b(\psi_1,\psi_2)=&\psi_2-h^N \textrm{Op}_h(e_{21}^{(N)})\psi_1-h^N \textrm{Op}_h(e_{22}^{(N)})\psi_2.
\end{aligned}
\end{equation}
Also, the kernel of the operator $TG_{j}$, $j=0,1$, satisfies
\begin{equation}
\label{eq_kernel}
|\p_x^\alpha\p_{y'}^\beta TG_j(x,y', z;h)|\le \mathcal{O}(h^M),\quad |\alpha|+|\beta|\le M,
\end{equation}
where
$M=M(N)\to \infty$, as $N\to \infty$. When verifying \eqref{eq_kernel}, it suffices to consider $T\Pi_0B^{(N)}_{11}$, since the treatment of the other terms in
\eqref{eq_G}
 is similar.
It follows from \eqref{eq_TPI} that
\[
T\Pi_0B^{(N)}_{11}\varphi=h^{N+1}\textrm{Op}_h(c_N)(B^{(N)}_{11}\varphi\otimes\delta_{x_n=0}),\quad x_n>0,
\]
with $c_N\in S^{2\delta(N+1),-N-1}_\delta$. The kernel of this operator is of the form
\[
\frac{h^{N+1}}{(2\pi h)^{2n-1}}\int e^{ix_n\xi_n/h}e^{i(x'-y')\cdot\xi'/h}e^{i(y'-z')\cdot\eta'/h}c_N(x,\xi)b_{11}^{(N)}(y',\eta')dy'd\eta' d\xi,
\]
and since $c_N\in S^{2\delta(N+1),-N-1}_\delta$, it is easy to see that \eqref{eq_kernel} holds.

The right parametrix of \eqref{eq_bvp_upper_half} takes the form
\begin{equation}
\label{eq_param_r_i+r_b}
R(f,g_1,g_2)=R_{\textrm{int}}(f) + R_b(g_1,g_2),
\end{equation}
where
\[
R_{\textrm{int}}(f)= \textrm{Op}_h(r^{(N)})(\tilde f) -G_0\gamma_0\textrm{Op}_h(r^{(N)})(\tilde f)-G_1\gamma_0 hD_{x_n}\textrm{Op}_h(r^{(N)})(\tilde f).
\]
One can also check that the kernels of
$TG_j\gamma_0(hD_{x_n})^j\textrm{Op}_h(r^{(N)})$, $j=0,1$, satisfy the  estimates
\begin{equation}
\label{eq_3}
|\p_x^\alpha\p_{y}^\beta TG_j \gamma_0 (hD_{x_n})^j\textrm{Op}_h(r^{(N)})
(x,y, z;h)|\le \mathcal{O}(h^M), |\alpha|+|\beta|\le M,  j=0,1,
\end{equation}
where
$M=M(N)\to \infty$, as $N\to \infty$. We refer to  \cite[Section 3]{SjoZwo93} for the details of this verification  based on a contour deformation argument in the complex $\xi_n$-plane and repeated integration by parts. Finally, we have
\begin{equation}
\label{eq_4}
|\p_{x'}^\alpha\p_{y}^\beta (\gamma_0 (hD_{x_n})^j R_{\textrm{int}})
(x',y, z;h)|\le \mathcal{O}(h^M), |\alpha|+|\beta|\le M,  j=0,1,
\end{equation}
where
$M=M(N)\to \infty$, as $N\to \infty$. This completes the construction of the right parametrix for the problem \eqref{eq_bvp_upper_half}.

\subsection{Global parametrix}
We can find finitely many points $x_j\in  \overline{\Omega}$, $1\le j\le L$, such that $x_j\in \Omega$, $1\le j\le L'$, $x_j\in \p \Omega$, $L'+1\le j\le L$, and neighborhoods $U_j$ of $x_j$ forming an open cover of $\overline{\Omega}$ such that we can introduce  boundary normal coordinates in each $U_j$, $L'+1\le j\le L$.   Let $\varphi_j\in C^\infty_0(U_j)$ form a partition of unity in $\overline\Omega$.
Take $\psi_j\in C^\infty_0(U_j)$ with $\psi_j=1$ near $\supp(\varphi_j)$. Then define the global parametrix $\mathcal{R}=\mathcal{R}(z;h)$
by
\[
\mathcal{R}(f,g_1,g_2)=\sum_{j=1}^{L'}\psi_j\textrm{Op}_h(r^{(N)})\varphi_j f +\sum_{j=L'+1}^L\psi_j R_j (\varphi_j  f,\varphi_j|_{\p \Omega} g_1, \varphi_j|_{\p \Omega} g_2).
\]
Here when $L'+1\le j\le L$, in the boundary normal coordinates in $U_j$,  $R_j$ is of the form \eqref{eq_param_r_i+r_b}.

Let us recall the space $\mathcal{H}^s$ introduced in \eqref{eq_space_H_s}.  As before, we  equip the spaces $\mathcal{H}^s$ and $H^s(\Omega)$ with the natural semiclassical norms. Then it follows from \eqref{eq_mapping_prop}, \eqref{eq_mapping_B} and \eqref{eq_G} that
 the operator
\[
\mathcal{R}:\mathcal{H}^{4}\to H^4(\Omega),
\]
is bounded,  where we do not insist on any uniformity with respect to $h$.

Recall from Section 3 the operator
$\mathcal{T}= (T,\gamma_0,\gamma_0 hD_\nu)$.
Then it is  standard to see, using Proposition \ref{prop_interior_parametrix} together with \eqref{eq_1}, \eqref{eq_kernel}, \eqref{eq_3},  \eqref{eq_4},  that the operator $\mathcal{R}$ satisfies,
\[
\mathcal{T}\mathcal{R}-1=\mathcal{O}(h^M):\mathcal{H}^4\to \mathcal{H}^4,
\]
with
$ M=M(N)\to \infty$, as  $N\to \infty$.
We conclude that for $h$ small enough, the operator $\mathcal{T}$  has the right inverse.
By Proposition \ref{prop_index_0}, it follows that the operator $\mathcal{T}: H^4(\Omega)\to \mathcal{H}^4$ is invertible for $h$ small enough and $\Im z\ge h^{\delta/2}$, for $\delta >0$ sufficiently small. Applying the semiclassical reduction of Section 3, we
obtain that there exists a constant $C>0$ such that all transmission eigenvalues $\lambda\in \C$ with $|\lambda|\ge C$ satisfy $|\Im \lambda|\le C |\lambda|^{1-\delta/4}$.

\section{The region $\Re \lambda\le 0$ of the complex plane}

\label{sec_left_half}

In order to complete the proof of Theorem \ref{thm_main}, it remains to show that the left half-plane $\Re \lambda\le 0$ contains at most finitely many transmission eigenvalues.

Let $(\cdot,\cdot)$ be the scalar product in $L^2(\Omega)$. When $u\in H^4(\Omega)\cap H^2_0(\Omega)$ and $\Re \lambda<0$, we have
\begin{align*}
\Re (T(\lambda)u,u)&=(qP_0u,P_0u)+2|\Re \lambda| \Re (P_0 u,q u)+|\Re \lambda|(P_0 u,u)\\
&+((\Re \lambda)^2-(\Im \lambda)^2)((1+q)u,u).
\end{align*}
We have already established  that all but finitely many transmission eigenvalues belong to the region
\[
|\Re \lambda|\ge C|\Im \lambda|,
\]
where $C>0$ is the constant that can be taken arbitrarily large. Restricting the attention to this region, for $C$ large enough, we get
\begin{align*}
\Re (T(\lambda)u,u)&\ge 2|\Re \lambda| \Re (P_0 u,q u)+|\Re \lambda|(P_0 u,u)+ \frac{1}{2}(\Re \lambda)^2\|u\|^2,\\
& = 2|\Re \lambda|\|q^{1/2}\nabla u\|^2+2|\Re \lambda| \Re (\nabla u, u\nabla q)\\
&+
|\Re \lambda| \|\nabla u\|^2+ \frac{1}{2}(\Re \lambda)^2\|u\|^2.
\end{align*}
Using the inequality
\begin{align*}
2|\Re \lambda| |(\nabla u,u\nabla q)|\le 2 |\Re \lambda|\|u\nabla q\|\|\nabla u\|\le \varepsilon  |\Re \lambda|^2\|\nabla q\|_{L^\infty}^2\|u\|^2+\frac{1}{\varepsilon}\|\nabla u\|^2
\end{align*}
with $\varepsilon>0$, we  obtain that
\begin{align*}
\Re (T(\lambda)u,u)\ge \bigg(\frac{1}{2}-\varepsilon \|\nabla q\|_{L^\infty}^2\bigg)(\Re \lambda)^2 \|u\|^2+ \bigg(|\Re \lambda|-\frac{1}{\varepsilon}\bigg)\|\nabla u\|^2.
\end{align*}
Choosing $\varepsilon$ small enough, we see that the region $\Re \lambda <-2 \|\nabla q\|^2_{L^\infty}$ does not contain any transmission eigenvalues.
Since the strip $\Re \lambda\in [-2 \|\nabla q\|^2_{L^\infty},0]$ contains at most finitely many transmission eigenvalues, the result follows.
This completes the proof of Theorem \ref{thm_main}.


\begin{appendix}

\section{Basic facts on semiclassical calculus}

Let us start by recalling the definition of the following standard symbol class. Let $S^k(\R^n\times\R^n)$ be the space of symbols $a(x,\xi,z;h)$, which are $C^\infty$ with respect to $(x,\xi)\in \R^n\times \R^n$, such that for all $\alpha,\beta\in \N^n$, there is a constant $C_{\alpha,\beta}$ so that uniformly in $h$ and $z$, we have
\[
|\p ^\alpha_x\partial^\beta_\xi a(x,\xi,z;h)|\le C_{\alpha,\beta}\langle \xi \rangle^{k-|\beta|}, \quad \textrm{for all}\quad (x,\xi)\in \R^n\times \R^n.
\]
Here $\langle \xi \rangle=(1+|\xi|^2)^{1/2}$.

The classical quantization of the symbol $a\in S^k$ is given by
\[
\textrm{Op}_h(a)u(x)=\frac{1}{(2\pi h)^n}\int_{\R^n} e^{i x\cdot\xi/h } a(x,\xi,z; h)\hat u(\xi/h)d\xi,
\]
where
\[
\hat{u}(\xi)=\int_{\R^{n}} e^{-ix\cdot\xi}u(x)dx
\]
is the Fourier transform.

Let $a\in S^{k_1}$ and $b\in S^{k_2}$. We have
\[
\textrm{Op}_h(a)\textrm{Op}_h(b)=\textrm{Op}_h(a\#b),
\]
where $a\#b\in  S^{k_1+k_2}$ is given by
\begin{equation}
\label{eq_comp_symbol}
\begin{aligned}
a\#b(x,\xi,z;h)&=e^{-ix\cdot \xi/h}\textrm{Op}_h(a)(b(\cdot,\xi)e^{i(\cdot)\cdot \xi/h})\\
&=\frac{1}{(2\pi h)^n}\int\!\!\! \int a(x,\eta,z;h)b(y,\xi,z;h)e^{\frac{i}{h}(x-y)\cdot (\eta-\xi)}dyd\eta.
\end{aligned}
\end{equation}
Moreover, the symbol $a\#b$ has the asymptotic expansion, see \cite{Sjostrand_book},
\begin{equation}
\label{eq_comp_expansion}
a\#b(x,\xi,z;h)\sim \sum_{|\alpha|\ge 0} \frac{h^{|\alpha|}}{\alpha!} \p_\xi^\alpha a D^\alpha_x b,
\end{equation}
in the sense that for any $N$,
\[
a\#b(x,\xi,z;h)- \sum_{|\alpha|< N} \frac{h^{|\alpha|}}{\alpha! }\p_\xi^\alpha a D^\alpha_x b\in h^{N}S^{k_1+k_2-N}.
\]

When studying the invertibility of the family $T$ given by \eqref{eq_family_T}, we shall encounter symbols having a slightly degenerate behavior in the region where $|\xi|$ is bounded, and to keep track of that we introduce the following symbol class,  based on $S^k(\R^n\times\R^n)$.
When $0\le \delta<1/2$, $m\ge 0$, $k\in\R$, we let $S^{m,k}_{\delta}(\R^n\times\R^n)$ stand for the space of symbols $a(x,\xi,z;h)$, which are $C^\infty$ with respect to $(x,\xi)\in \R^n\times \R^n$, such that
\begin{itemize}
\item[(1)] $\supp(a)\subset K\times \R^n$, where $K$ is a compact subset of $\R^n$,
\item[(2)]  for $\xi$ outside some $h$-independent compact  set,  $a\in S^k$,
\item[(3)]  $a$ satisfies
\[
|\p ^\alpha_x\partial^\beta_\xi a(x,\xi,z;h)|\le C_{\alpha,\beta,L}h^{-m}h^{-\delta(|\alpha|+|\beta|)}, \quad \textrm{for all}\quad (x,\xi)\in \R^n\times L,
\]
for all $\alpha,\beta\in \N^n$ and all compact sets $L\subset \R^n$.
\end{itemize}

Let $a\in S^{m_1,k_1}_\delta$ and $b\in S^{m_2,k_2}_\delta$. Then we shall show that  the symbol $a\#b\in S^{m_1+m_2,k_1+k_2}_\delta$.
Indeed, let $\chi\in C_0^\infty(\R^n)$, $\supp(\chi)\subset \{\xi:|\xi|<1/2\}$ and $\chi=1$ near $0$. Then using \eqref{eq_comp_symbol} let us write
$a\#b=c_1+c_2$, where
\begin{align*}
c_2(x,\xi,&z;h)\\
&=\frac{1}{(2\pi h)^n}\int\!\!\! \int a(x,\eta,z;h)b(y,\xi,z;h)\bigg(1-\chi\bigg(\frac{\eta-\xi}{\langle \xi\rangle}\bigg)\bigg)e^{\frac{i}{h}(x-y)\cdot (\eta-\xi)}dyd\eta.
\end{align*}
Carrying out repeating partial integrations with respect to the variable $y$, we see that for any $N$, $\alpha$, $\beta$, there exits $C_{N\alpha\beta}>0$ such that
\[
|\p_x^\alpha\p_\xi^\beta c_2(x,\xi,z;h)|\le C_{N\alpha\beta}h^N\langle \xi\rangle ^{-N},\quad (x,\xi)\in \R^n\times\R^n.
\]
Now
 \begin{align*}
c_1(x,\xi,z;h)
=\frac{1}{(2\pi h)^n}\int\!\!\! \int a(x,\eta,z;h)b(y,\xi,z;h)\chi\bigg(\frac{\eta-\xi}{\langle \xi\rangle}\bigg)e^{\frac{i}{h}(x-y)\cdot (\eta-\xi)}dyd\eta\\
=\bigg(\frac{\langle \xi\rangle}{2\pi h}\bigg)^n\int\!\!\! \int a(x,\langle\xi\rangle\eta+\xi,z;h)b(x+y,\xi,z;h)\chi(\eta)e^{-\frac{i\langle\xi\rangle}{h}y\cdot \eta}dyd\eta.
\end{align*}
It follows from the standard semiclassical calculus \cite{Sjostrand_book},  that $c_1\in S^{m_1+m_2,k_1+k_2}_\delta$, with the natural asymptotic expansion, similar to \eqref{eq_comp_expansion}.
In particular,
\[
a\#b-ab\in hS^{m_1+m_2+2\delta, k_1+k_2-1}_\delta.
\]

Finally let us recall the following mapping properties of the classical quantization of the symbol $a\in S^{m,k}_{\delta}$, $0\le \delta<1/2$, $m\ge 0$, $k\in \R$, see \cite{Sjostrand_book},
\[
\textrm{Op}_h(a):H^s(\R^n)\to H^{s-k}(\R^n),\quad s\in \R,
\]
and
\[
\|\textrm{Op}_h(a)\|_{H^s\to H^{s-k}}\le \mathcal{O}(h^{-m}).
\]
Here the standard Sobolev space $H^s(\R^n)$ has been equipped with the natural semiclassical norm
\[
\|u\|_{H^s}^2=\frac{1}{(2\pi)^n}\int_{\R^n} (1+|h\xi|^2)^s|\hat u(\xi)|^2 d\xi.
\]

\end{appendix}

\section*{Acknowledgements}
 The research of M.H. was partially supported by the NSF grant DMS-0653275 and he is grateful to the Department of Mathematics and Statistics at the University of Helsinki for the hospitality. The research of K.K. was financially supported by the
Academy of Finland (project 125599).
The research of P.O. and L.P. was financially supported by Academy of Finland Center of Excellence programme 213476.

\end{document}